\documentclass[12pt]{article}
\linespread{1.5}
\usepackage[T1]{fontenc}
 \usepackage{kpfonts}
\usepackage[dvipsnames]{xcolor}
\usepackage{pdfpages}
\usepackage{sgame}
\usepackage{accents}
\usepackage{tikz-cd}
\usepackage{float}
\usepackage[round]{natbib}   
\usepackage{multirow}
\usepackage{dutchcal}
\usepackage{pdfpages}
\usepackage{sgame}
\usepackage{mathtools}
\usepackage{amsthm}
\usepackage{enumitem}
\usepackage[margin=1in]{geometry}
\usepackage{epigraph}
\usetikzlibrary{calc}
\usetikzlibrary{arrows}

\newcommand{\ubar}[1]{\underaccent{\bar}{#1}}

\newtheorem{theorem}{Theorem}[section]

\newtheorem{lemma}[theorem]{Lemma}

\theoremstyle{definition}

\newtheorem{example}[theorem]{Example}

\newtheorem{remark}[theorem]{Remark}



\usepackage{hyperref}
\hypersetup{
    colorlinks=true,
    linkcolor=Violet,
    filecolor=Violet,      
    urlcolor=Violet,
    citecolor=Violet,
}

\providecommand{\keywords}[1]{\textbf{\textit{Keywords:}} #1}
\providecommand{\jel}[1]{\textbf{\textit{JEL Classifications:}} #1}
\bibliographystyle{plainnat}

\makeatletter
\makeatother

\begin{document}
\author{Mark Whitmeyer\thanks{Hausdorff Center for Mathematics \& Institute for Microeconomics, University of Bonn \newline Email: \href{mailto:mark.whitmeyer@gmail.com}{mark.whitmeyer@gmail.com}. \newline I am grateful to an anonymous referee for a different paper for the comment that sparked this idea. I also thank \"Ozlem Bedre-Defolie, Vasudha Jain, Stephan Lauermann, Benny Moldovanu, Joseph Whitmeyer, and Thomas Wiseman for helpful comments and feedback. This work was generously funded by the DFG under Germany's Excellence Strategy-GZ 2047/1, Projekt-ID 390685813.} }

\title{Persuasion Produces the (Diamond) Paradox}


\date{\today}

\maketitle

\begin{abstract} This paper extends the sequential search model of \cite{asher} by allowing firms to choose how much match value information to disclose to visiting consumers. This restores the Diamond paradox (\cite{diamond}): there exist no symmetric equilibria in which consumers engage in active search, so consumers obtain zero surplus and firms obtain monopoly profits. Modifying the scenario to one in which prices are advertised, we discover that the no-active-search result persists, although the resulting symmetric equilibria are ones in which firms price at marginal cost.
\end{abstract}
\keywords{Diamond Paradox, Sequential Search, Information Design, Bayesian Persuasion}\\
\jel{C72; D82; D83}
\newpage

\setlength{\epigraphwidth}{2.5in} \epigraph{I'm gonna stop, uh, shoppin' around.}{Elvis Presley (\href{https://www.youtube.com/watch?v=ADjm8yzYFW4}{G.I. Blues})}

\section{Introduction}

The Diamond paradox (\cite{diamond}) is a stark result that highlights the importance of search frictions in models of price competition. Infamously, Diamond establishes that even in a market with a large number of firms, an arbitrarily small (yet positive) search cost ensures that the firms behave like monopolists and that consumers do not search. The intuition behind this result is well-known: for any price strictly below the monopoly price, demand is locally perfectly inelastic and so a firm can always improve its lot by raising its price slightly. 

As a number of subsequent papers illustrate, small modifications to the model can overturn the result. Two such works stand out in particular: \cite{sta} assumes that a fraction of the market's consumers have zero search cost and therefore freely buy from whomever sets the lowest price. As a result, demand for each firm slopes down again and they behave like monopolists no longer. \cite{asher} takes a different approach. In his model, although each consumer faces a positive search cost, the firms' products are differentiated and consumers have imperfect information about the products. Each consumer's match value at any firm is an i.i.d. random variable, about which the firms and consumer are \textit{ex-ante} uninformed. Upon visiting a firm, a consumer discovers both the firm's price as well as the realized match value. This uncertainty begets equilibria with active search. 

In this paper, we revisit the framework of \cite{asher} but alter it by assuming that each firm may choose how much information to provide to a consumer during her visit, in addition to setting a price. Each firm commits to a signal that maps a consumer's match value to a (conditional) distribution over signal realizations. In the spirit of the original Diamond paper, neither the signal nor the price chosen by a firm is observable until a consumer incurs the search cost. That is, after paying the search cost to visit a firm, a consumer then observes that firm's signal, price, and draws a signal realization that yields her posterior belief about the firm's match value. 

This modification has drastic consequences. We find that even in a model with product differentiation and imperfect information, information provision \textit{restores the Diamond paradox}. Namely, the main result of this paper, Theorem \ref{main}, states that there are no symmetric equilibria with active consumer search. Each consumer visits at most\footnote{There is a trivial equilibrium in which consumers conjecture extremely high prices and do not visit any firms. As is convention, throughout this paper, we ignore this equilibrium.} one firm and purchases from it. In all symmetric equilibria, firms leave consumers with zero rents.

The intuition to this result is similar to that in the original paper by Diamond. There, in any purported equilibrium with active search, firms can always exploit visiting consumers by raising their prices slightly, which does not affect their purchase decisions due to the positive search cost. Here, firms deviate by providing slightly less information and pooling beliefs above consumers' stopping thresholds. Subsequently, given that any equilibrium must involve no search, the classic Diamond paradox incentive kicks in and firms must obtain monopoly profits. 

This incentive to pool beliefs is extremely strong, and continues to drive the results even when prices are posted and so can be observed before consumers embark on their searches. The second finding of this paper, Theorem \ref{main2}, is that even when prices are posted--and therefore shape consumers' search behavior directly--there exist no symmetric equilibria with active search. Again, any purported equilibrium in which there is active search would allow firms to deviate profitably by providing less information. In the unique symmetric equilibrium, because prices are posted, the usual Bertrand forces apply such that firms price at marginal cost and obtain no profits. Consumer surplus is merely the expected match value of the firm and no useful information is provided. 

This paper belongs to the growing collection of papers that explore information design and persuasion in consumer search settings. This literature includes \cite{board}, who also provide conditions for an analog of the Diamond paradox to arise. Their model is completely different, however: the uncertain state of the world is common and so the competing sellers each provide information about the common state to prospective consumers. If sellers can perfectly observe a consumer's current belief about the state upon her visit, or coordinate their persuasion strategies, then there is an equilibrium in which they provide the monopoly level of information and set the monopoly price. This equilibrium is not generally unique and requires (mild) additional assumptions that pertain to the competitive persuasion problem with a common state. Furthermore, the sellers in \cite{board} do not set prices but compete through information alone.

In contrast, this paper is a direct adaptation of \cite{asher}, in which the distributions over match values are endogenous objects. Importantly, a consumer's match value at any firm is i.i.d., so a firm's choice of distribution does not affect a consumer's belief at any other firm. This paper shows that the paradox occurs when firms set prices as well, and even holds when prices are posted (advertised) and so search is directed. Due to the independence of match values, firms know a visiting consumer's prior about their quality, but not her outside option, which is endogenously determined by her search behavior. Thus, although a firm may not know precisely where it is in a consumer's search order, it knows the upper bound of a consumer's outside option and can tailor its information accordingly.

Two other papers similar to this one are \cite{perc} and \cite{whau}. The first explores a game of pure information provision (without prices) between sellers in \cite{wei}'s classic sequential search setting. There, if consumers must incur a search cost to discover a firm's match value distribution, firms do not provide any (useful) information and consumers do not search actively, as in this paper. \cite{whau}, in turn, modifies \cite{perc} by allowing firms to compete both through prices and (advertised) information. In some sense there is an asymmetry--with posted information, there exist symmetric equilibria with active search even when prices are hidden, but as we discover here, the mirrored statement is false.

Another relevant paper is \cite{hu}, who look at consumer-optimal information structures in Wolinsky's model of sequential search. Related to that is recent work by \cite{zhou}, who looks broadly at how improved information affects consumer welfare in Wolinsky's setting. \cite{za}, in turn, focus on consumer-optimal information structures in a (frictionless) duopoly market. Like this paper, \cite{hkb} also look at competitive information design and price setting in an oligopoly market, albeit one without search frictions. In their framework, absent such frictions, they show that firms provide (some) information to consumers and that both consumers and firms obtain some surplus in the market.

Finally, this paper relates naturally to the vein of research that focuses on obfuscation. One particularly relevant paper is \cite{wol}, who modify the model of \cite{sta} by allowing firms to choose the length of time necessary for consumers to learn their prices. Like firms' information policies in this paper, the required time to learn the price at a firm is hidden from consumers until their visits. A pair of important distinctions between their paper and this one; however, are that products in their model are undifferentiated and the level of obfuscation chosen by a firm affects consumers by altering their \textit{future} search costs. 

In contrast, here, firms choose how much information about their products consumers acquire during their visits, and therefore shape consumers' valuations directly. Thus, any alteration by a firm to its information policy may come at a cost, since it could potential limit the rents the firm could extract. Moreover, prices in their model are hidden (given the market's homogeneity, a posted price version of their model is just the canonical Bertrand setup), whereas we find that the same economic intuition drives this paper's results regardless of whether prices are posted. Firms could (costlessly) provide useful information, yet do not do so in any symmetric equilibrium.

\section{Model}

The foundation for this paper is the workhorse sequential search model of \cite{asher}. There are $n$ symmetric firms and a unit mass of consumers with unit demand. The match value of a consumer at firm $i$ is an i.i.d. random variable, $X_{i}$, distributed according to (Borel) cdf $G$ on $\left[0,1\right]$. Let $\mu$ denote the expectation of each $X_{i}$. Given a realized match value of $x_{i}$ and price $p_{i}$, a consumer's utility from purchasing from firm $i$ is 
\[u\left(x_{i},p_{i}\right) = x_{i} - p_{i} \text{ .}\]

In contrast to the original model of \cite{asher}, a consumer does not directly observe her match value at firm $i$, but instead observes a signal realization that is correlated with it. Each firm has a compact metric space of signal realizations, $S$, and commits to a signal, Borel measurable function $\pi_{i}: \left[0,1\right] \to \Delta(S)$. As is well known, each signal realization begets--via Bayes' law--a posterior distribution over values, and thus a signal begets a (Bayes-plausible) distribution over posteriors. Alternatively, a signal begets a distribution over posterior means, and the following remark is now standard in the literature:

\begin{remark}
Each firm's choice of signal, $\pi_{i}$, is equivalent to a choice of distribution $F_{i} \in \mathcal{M}(G)$, where $\mathcal{M}\left(G\right)$ is the set of all mean-preserving contractions of $G$.
\end{remark}

Thus, each firm chooses a distribution over values, $F_{i}$, and sets a price $p_{i}$. Importantly, a consumer only observes these choices upon her visit to a firm. Following \cite{asher}, search is sequential and with recall. At a cost of $c > 0$, a consumer may visit a firm and observe its price and realized draw from distribution $F_i$. As does \cite{diamond}, we assume that a consumer incurs no search cost for her first visit. For simplicity, we impose that the marginal cost for each firm is $0$.

We look for symmetric equilibria in which each firm sets the same price $p$ and chooses the same distribution over values $F$.\footnote{\cite{asher} focuses on symmetric pure strategy pricing equilibria, which emphasis is echoed in this paper. Nevertheless, the results of this paper hold for all symmetric equilibria, in both pure and mixed strategies.} Our equilibrium concept is (weak) Perfect Bayesian Equilibrium (PBE), in which consumers have the same beliefs about the (theretofore unobserved) prices and information policies (signals) chosen by unvisited firms on and off the equilibrium path. This is customary in the  literature and seems reasonable.\footnote{This assumption; however, is not innocuous. See \cite{janssen2020beliefs} for an in depth exploration of the ramifications of this assumption. As they note there--and in an earlier paper, \cite{janssen2015consumer}--this stipulation may be more objectionable in markets with vertical relations.} For simplicity, a consumer has an outside option of $0$. 

It is a standard result that a consumer follows a reservation price policy. Indeed, a consumer's search problem is a special case of the one explored in \cite{wei}. Define the reservation value, $z$, induced by the conjectured price $\tilde{p}$ and distribution $\tilde{F}$ by
\[\label{eq1}\tag{$\star$} c = \int_{z+\tilde{p}}^{1}\left(x - \tilde{p} - z\right)d\tilde{F}(x) \text{ .}\]
In a symmetric equilibrium, a consumer's optimal search protocol is to visit firms in random order and stop and purchase from firm $i$ if and only if the realized value at that firm, $x_{i}$, satisfies $x_{i} - p \geq z$, where $p$ is the actual price set by firm $i$. If $x_{i} < z + p$ for all $i$ then a consumer selects the firm whose realized value $x_{i}$ is highest. At equilibrium, the conjectured price, $\tilde{p}$, must equal the actual price set by each firm, $p$; and the conjectured distribution, $\tilde{F}$, must equal the actual distribution chosen by each firm, $F$.

Note that we are using \cite{wei}'s formulation of a consumer's stopping problem, which is slightly nonstandard but equivalent to that of \cite{asher}. This allows for an easy transition into the next subsection, wherein we allow for advertised (posted) prices.

Now, let us establish the main result. The first step is to derive the following lemma.
\begin{lemma}\label{nosym}
There are no symmetric equilibria in which consumers visit more than one firm.
\end{lemma}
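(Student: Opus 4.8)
The plan is to rule out symmetric active-search equilibria by exhibiting a profitable deviation for a single firm, mirroring Diamond's original argument but executed through the firm's \emph{information} choice rather than (or in addition to) its price. Fix a candidate symmetric equilibrium in which firms play $(p, F)$ and in which, with positive probability, some consumer continues past a firm and visits at least one more. By the reservation-value characterization in \eqref{eq1}, a visiting consumer stops and buys whenever her realized posterior mean $x_i$ satisfies $x_i - p \ge z$, i.e.\ $x_i \ge z + p$, where $z$ is the reservation value induced by the conjectured pair $(\tilde p, \tilde F) = (p, F)$. Active search means precisely that $F$ places positive mass below the threshold $z + p$, so with positive probability a consumer draws a posterior mean in $[0, z+p)$ and walks away without purchasing.

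The key step is the deviation. Because the firm's signal is hidden until after the consumer has sunk the cost to visit, the visiting consumer's reservation value $z$ is pinned down by her equilibrium \emph{conjecture} and does not respond to the firm's actual off-path choice. I would have the deviating firm keep its price at $p$ but replace $F$ by a mean-preserving contraction $F'$ that pools the low realizations: specifically, redistribute the mass that $F$ puts strictly below the threshold $z+p$ so that it sits at or just above $z+p$, while leaving the conditional distribution above the threshold essentially untouched. Since $F \in \mathcal{M}(G)$ and any further mean-preserving contraction of $F$ is again a mean-preserving contraction of $G$, the pooled distribution $F'$ lies in $\mathcal{M}(G)$ and is therefore a feasible information policy (Remark~1). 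Under $F'$ the consumer now draws a posterior mean above her (unchanged) stopping threshold with higher probability, so she stops and buys from the deviating firm strictly more often, at the same price $p > 0$. This strictly raises the firm's profit, contradicting equilibrium.

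The main obstacle is handling feasibility and the threshold bookkeeping cleanly. Two points require care. First, I must verify that the pooled $F'$ is genuinely a mean-preserving contraction of $G$ and not merely of $F$: since contractions compose (a mean-preserving contraction of a mean-preserving contraction of $G$ is a mean-preserving contraction of $G$), pooling only \emph{reduces} spread and so stays inside $\mathcal{M}(G)$; I would state this composition fact and check that the pooling operation preserves the mean, which it does by construction. Second, I must ensure the deviation strictly increases the purchase probability, which requires that $F$ actually place positive mass strictly below $z+p$ in the candidate equilibrium---but this is exactly the content of ``active search,'' since if no mass fell below the threshold every consumer would buy at her first firm and there would be no active search to begin with. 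A subtlety worth flagging is that the deviation must not be undone by any price response: since we hold $p$ fixed and only contract the information, and since the consumer's threshold is fixed by her conjecture, there is no channel by which the consumer can punish the deviation before buying.

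A secondary obstacle is the off-path belief convention. The argument leans on the assumption (stated in the model) that consumers hold fixed beliefs $(\tilde p, \tilde F)$ about unvisited firms both on and off path, so that a deviating firm cannot be ``detected'' before the consumer sinks the search cost and is already at the firm. I would invoke this explicitly: the reservation value $z$ entering the stopping rule is computed from the conjectured $(\tilde p, \tilde F)$, so the deviating firm faces a consumer whose willingness to stop is governed by a threshold it can exploit by pooling beliefs above it. With these pieces in place, the existence of the profitable pooling deviation contradicts the supposed equilibrium, establishing that no symmetric equilibrium can feature active search.
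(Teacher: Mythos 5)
Your overall strategy---hold the consumer's reservation value fixed at her conjecture (since the signal is hidden until after the visit) and deviate to a contraction that pushes more mass into the stopping region---is exactly the paper's strategy. But your key construction has a genuine flaw: the deviation you describe is not mean-preserving. You propose to redistribute the mass that $F$ puts strictly below the threshold $z+p$ so that it sits at or just above $z+p$, \emph{while leaving the distribution above the threshold untouched}. Moving mass up from strictly below $z+p$ with no compensating movement elsewhere strictly raises the mean, so the resulting $F'$ is not a mean-preserving contraction of $F$ and need not lie in $\mathcal{M}(G)$; your assertion that the operation ``preserves the mean, which it does by construction'' is precisely where the argument breaks. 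Note also that pooling \emph{only} the low realizations (a genuine fusion of that mass alone) would place them at their own barycenter, which lies \emph{below} $z+p$ and gains nothing. The missing idea---which is the heart of the paper's proof---is to fuse the mass lying \emph{above} the threshold together with only a small fraction $\epsilon > 0$ of the mass lying below it, collapsing both to their common barycenter $\hat{x}$. Because the above-threshold mass dominates when $\epsilon$ is small, $\hat{x} > z + \tilde{p}$, so the $\epsilon$-sliver of low mass is carried into the stopping region; the operation is mean-preserving by construction, hence feasible, and the profit gain is $\epsilon$ times the pooled low mass times $(\tilde{p}-\alpha) > 0$, where $\alpha < \tilde{p}$ is the firm's average payoff on the low interval.

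A secondary omission: the lemma covers \emph{all} symmetric equilibria, including those in which firms mix over prices and distributions, and the paper devotes the second half of its proof to this case (working with the distribution of $Y = X - P$, and, in the subcase where no on-path price exhibits both stopping and continuation values with positive probability, deviating instead via no information plus a small price increase). Your proposal addresses only the pure-strategy case, so even with the fusion construction repaired it would prove a strictly weaker statement than the lemma claims.
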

\begin{proof}
Suppose for the sake of contradiction that there is such an equilibrium. For expositional ease, let us begin by assuming that the firms' choices of $F$ and $p$ are deterministic. Because $c > 0$ and by the definition of the reservation value, the conjectured distribution, $\tilde{F}$, must be such that both values strictly below and strictly above $z+ \tilde{p}$ occur with strictly positive probability. Accordingly, let $\left[\ubar{a},\bar{a}\right]$ and $\left[\ubar{b},\bar{b}\right]$ be intervals such that $\int_{\ubar{b}}^{\bar{b}}d\tilde{F}(x) > 0$ and $\int_{\ubar{a}}^{\bar{a}}d\tilde{F}(x)> 0$, where $\ubar{a} \leq \bar{a} < z + \tilde{p}$ and $z + \tilde{p} < \ubar{b} \leq \bar{b}$.

Given price $\tilde{p} > 0$,\footnote{Obviously, since $c > 0$, there are no symmetric equilibria in which the market price is $0$.} a firm's payoff from any value, $x$, that is weakly greater than $z + \tilde{p}$ is $\tilde{p}$. Moreover, its payoff from any value, $x$, that is strictly lower than $z + \tilde{p}$ is strictly less than $\tilde{p}$. Let $\alpha < \tilde{p}$ denote its average payoff (under $\tilde{F}$) for values in the interval $\left[\ubar{a}, \bar{a}\right]$.

It is easy to see then that a firm can deviate profitably by choosing distribution $\hat{F}$, where $\hat{F}$ is constructed from $\tilde{F}$ by taking the measure on $\left[\ubar{b},\bar{b}\right]$ and some fraction $\epsilon > 0$ of the measure on $\left[\ubar{a},\bar{a}\right]$ and collapsing them to their barycenter, and is set equal to $\tilde{F}$ everywhere else.\footnote{In the parlance of \cite{hill}, $\hat{F}$ is a fusion of distribution $\tilde{F}$. The notions of fusion and mean-preserving-contraction are (in this paper) equivalent.}

$\hat{F}$ will have a point mass on some $\hat{x}$, which will occur with probability $\int_{\ubar{b}}^{\bar{b}}d\tilde{F}(x) + \epsilon \int_{\ubar{a}}^{\bar{a}}d\tilde{F}(x)$. For $\epsilon$ sufficiently small, $\hat{x} > z + \tilde{p}$, and by construction $\hat{F} \in \mathcal{M}\left(\tilde{F}\right)$. Thus, since $\tilde{F} \in \mathcal{M}\left(G\right)$, $\hat{F} \in \mathcal{M}\left(G\right)$. Finally, the net change in the firm's payoff is $\epsilon \int_{\ubar{a}}^{\bar{a}}d\tilde{F}(x) \left(\tilde{p}-\alpha\right) > 0$, which concludes the proof.

It is simple to modify the proof to accommodate mixing by firms. In such a (purported) symmetric equilibrium, each firm chooses the joint distribution $F(x, p)$, over values and prices, $\left[\ubar{x},\bar{x}\right] \times \left[\ubar{p},\bar{p}\right]$, where each conditional distribution over values, $F_{X|P}\left(x|P = p\right) \in \mathcal{M}\left(G\right)$. No randomness is resolved, however, until consumers visit firms and so we may define a new random variable $Y \coloneqq X - P$ with distribution $H$ on some interval.\footnote{Clearly, because prices are non-negative, the upper bound of the support must be (weakly) less than $1$.} 

Accordingly, the reservation value induced by a consumer's conjectured $\tilde{H}$ is 
\[c = \int_{z}^{1}\left(y-z\right)d\tilde{H}(y) \text{ .}\]
Note that the induced reservation values, $z$, and distributions, $H$, chosen by firms are deterministic, and identical (because this a symmetric equilibrium). 

The remainder is (more-or-less) identical to the pure strategy case with some minor subtleties. Because consumers are searching actively, $\tilde{H}$ must be such that a consumer strictly prefers to stop and strictly prefers to continue with strictly positive probability. There are two cases: first, it is possible that for some $p'$ in the support of a firm's mixed strategy, the associated $\tilde{F}_{X|P}\left(x|p'\right)$ has the same property as $\tilde{H}$--namely, both (strict) stopping and (strict) continuation values occur with strictly positive probability. As in the pure strategy case, a firm can deviate profitably by pooling these values carefully.

On the other hand, it may be possible that there is no such $p'$. However, then, there must exist some $p''$ and $\tilde{F}_{X|P}\left(x|p''\right)$ in the support of a firm's mixed strategy such that, given $p''$, stopping values occur with probability one and strict stopping values realize with strictly positive probability. Consequently, a firm can deviate by providing no information (choosing the distribution $\delta_{X}\left(\mu\right)$)\footnote{$\delta_{X}\left(\mu\right)$ denotes the degenerate distribution $\mathbb{P}\left(X = \mu\right) = 1$.}--which, at price $p''$, must leave the consumer with strict incentive to stop--and instead charging some price $p''+\epsilon$, for a sufficiently small (yet strictly positive) $\epsilon$.
\end{proof}
In the standard Diamond paradox, firms have an incentive to raise prices slightly in order to take advantage of the quasi-monopoly power inferred upon them by the search cost. A similar effect occurs with regard to their information provision policies: they have an overwhelming incentive to provide slightly less information to consumers in order to increase the probability that they do not continue their searches.

\begin{example}\label{ex1}
Suppose there are infinitely many firms and that the conjectured distribution over match values is the uniform distribution on $[0,1]$, and the search cost is $c = 1/32$. Given the conjectured distribution the conjectured price must be $\sqrt{2c} = 1/4$--indeed, if this distribution over match values were exogenous and firms competed on price alone, this would be the unique symmetric equilibrium price (outside of the trivial, no-visit equilibrium). Moreover, the associated reservation value $z = 1/2$ and so a firm's profit is $1/16$. 

Now let a firm deviate in the manner proposed by the proof of Lemma \ref{nosym}. In particular, suppose it deviates by charging price $1/4$ and choosing a distribution that has a linear portion with slope $1$ on the interval $\left[0,1/2\right]$ and a point mass of size $1/2$ on value $3/4$. This brings the deviating firm an improved profit of $1/8$. The conjectured distribution and the described profitable deviation are depicted in Figure \ref{fig1}.

\begin{figure}
    \centering
    \includegraphics[scale=.15]{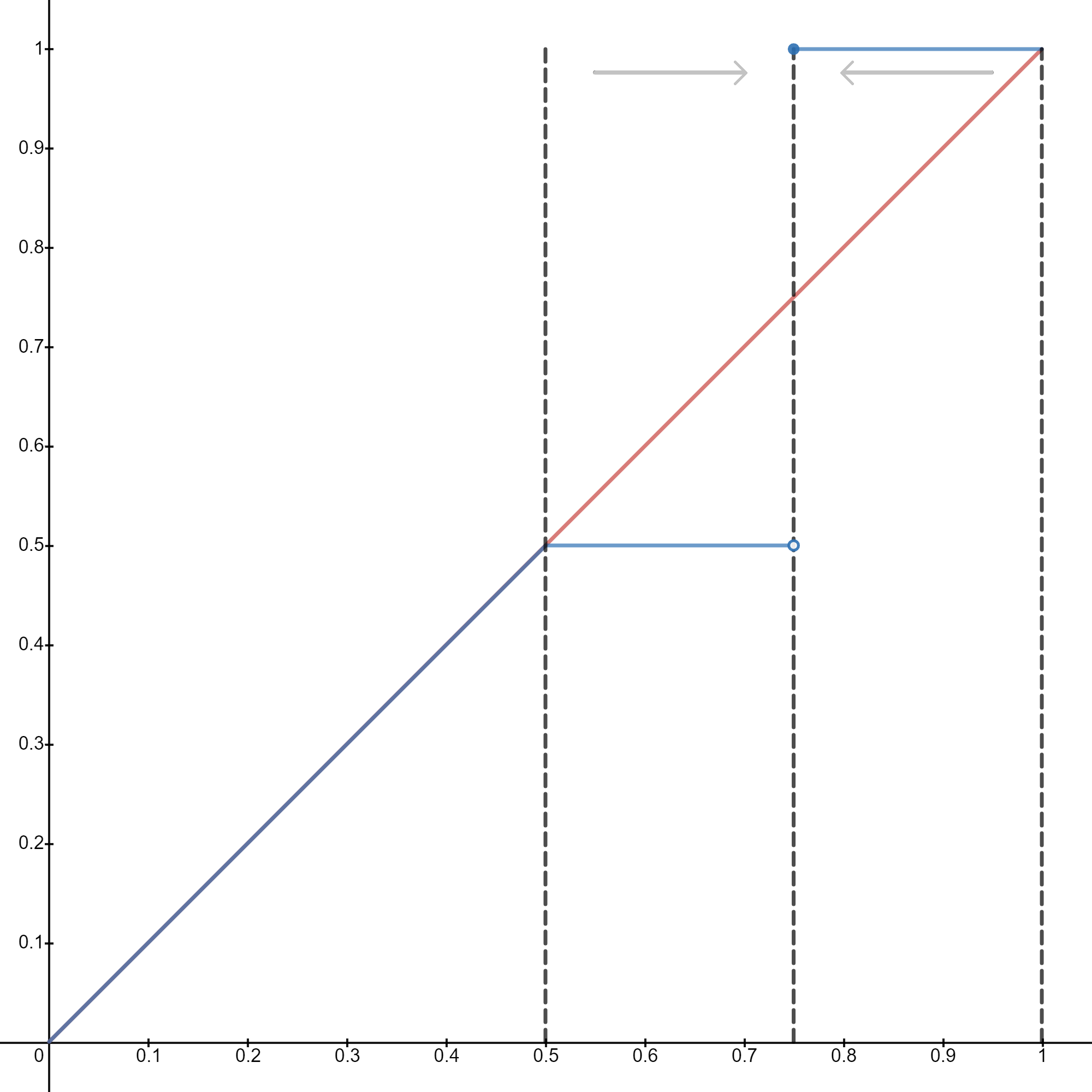}
    \caption{$\tilde{F}(x) = x$ (in red) and a profitable deviation, $\hat{F}(x)$ (in blue), from Example \ref{ex1}. The gray arrows indicate the fusing of measures (or the contraction of the distribution).}
    \label{fig1}
\end{figure}
\end{example}


Next, we find that, given that no consumer visits more than one firm, there are no equilibria in which consumers obtain any rents. 
\begin{lemma}
There are no symmetric equilibria in which consumers obtain any surplus.
\end{lemma}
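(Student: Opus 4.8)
The plan is to argue by contradiction, leaning on Lemma \ref{nosym}: in any symmetric equilibrium there is no active search, so each consumer visits a single firm (for which, as in \cite{diamond}, she pays nothing) and then either buys or takes her outside option of $0$. Suppose toward a contradiction that such an equilibrium leaves consumers with strictly positive expected surplus $CS>0$, and write $\Pi$ for a firm's expected profit per visiting consumer. I would first record the equilibrium reservation value $z$, pinned down by the reservation-value equation applied to the conjectured distribution $\tilde{H}$ of $Y = X - P$ (which, as the text notes, is deterministic even when firms mix). The no-search property forces a dichotomy: either $z\le 0$, in which case stopping is always weakly optimal because the best-so-far $\max(y,0)\ge 0\ge z$; or $z>0$, in which case no search requires all mass of $Y$ to lie at or above $z$, so that every consumer buys with probability one. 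I expect establishing this dichotomy---and checking that the consumer's off-path stopping behavior at a deviating firm is still governed by this same $z$ (since the unvisited firms are conjectured to play the equilibrium strategy)---to be the main subtlety.

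The crux is then a surplus-accounting bound. Per visiting consumer, $\Pi = E[P\,\mathbf{1}_{\text{buy}}]$ and $CS = E[(X-P)\mathbf{1}_{\text{buy}}]$ sum to $E[X\,\mathbf{1}_{\text{buy}}] \le E[X] = \mu$, where the inequality uses $X\ge 0$ and the equality uses that each conditional distribution over values is a mean-preserving contraction of $G$. Hence $\Pi \le \mu - CS$. Next I would show $\max(z,0) < CS$. This is immediate when $z\le 0$. When $z>0$ (the all-mass-above-$z$ case) every consumer buys, so $CS = E[Y]$, and the reservation equation collapses to $c = E[Y] - z$; thus $z = E[Y] - c = CS - c < CS$, since $c>0$. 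Either way, $\max(z,0) < CS$.

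To close the argument I would exhibit the profitable deviation. A firm provides no information---choosing the degenerate distribution $\delta_{X}(\mu)\in\mathcal{M}(G)$---and charges $p' = \mu - \max(z,0) - \eta$ for small $\eta>0$. Upon her (conjecturally ordinary) visit the consumer observes the certain value $\mu$ and price $p'$, obtaining $\mu - p' = \max(z,0) + \eta$, which strictly exceeds both her reservation value $z$ and her outside option $0$; she therefore strictly prefers to buy immediately. The deviation yields profit $p' = \mu - \max(z,0) - \eta$, and since $\mu - \max(z,0) > \mu - CS \ge \Pi$ by the two preceding steps, for $\eta$ small enough this strictly beats the equilibrium profit $\Pi$, a contradiction. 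Because the deviation is a pure strategy invoking only $z$, $\mu$, and the surplus bound---all of which are well defined against a mixed-strategy equilibrium---the same construction handles mixing verbatim, so no separate treatment is required.
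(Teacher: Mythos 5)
Your proof is correct and follows essentially the same route as the paper: invoke Lemma \ref{nosym}, split on the sign of the reservation value $z$, and exhibit a deviation to the uninformative distribution $\delta_{X}(\mu)$ paired with a price just below $\mu - \max\{z,0\}$ that guarantees an immediate sale. Your surplus-accounting bound $\Pi \le \mu - CS$ together with $\max\{z,0\} < CS$ is just a cleaner, unified way of packaging the paper's two case-by-case profitability computations (where the paper's case-one deviation price $\tilde{p}+c-\epsilon$ equals your $\mu - z - \eta$), and it has the minor virtue of making explicit that the deviating firm pools to $\delta_{X}(\mu)$ so the sale is certain.
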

\begin{proof}
From the previous lemma, we know that there are no symmetric equilibria in which consumers visit more than one firm. First, suppose that there is an equilibrium in which the reservation value chosen by each firm, $z$, is greater than $0$. Evidently, since a consumer visits just one firm, the reservation value must equal $\mu - c - \tilde{p}$, which is greater than $0$ by assumption. 

A firm's profit, conditional on a consumer's visit, is $\tilde{p} \leq \mu - c$. Clearly, then, a firm can deviate by charging price $p = \tilde{p} + c - \epsilon$. For $\epsilon > 0$ sufficiently small, this is a profitable deviation. Thus, there are no equilibria in which firms induce non-negative reservation values.

Second, suppose that there is an equilibrium in which the reservation value chosen by each firm, $z$, is strictly less than $0$. Now, each firm is facing a monopolist's problem where a consumer has an outside option of $0$. Accordingly, a firm can extract all of a consumer's surplus by providing no information (choosing the degenerate distribution $\delta_{X}\left(\mu\right)$) and charging $p = \mu$. 
\end{proof}

Combining the two lemmata, we have the first main result.
\begin{theorem}\label{main}
In any symmetric equilibrium in which firms are visited, each firm makes a sale with certainty, conditional on being visited, and obtains the monopoly profit of $\mu$. There is no active search and consumer surplus equals $0$.
\end{theorem}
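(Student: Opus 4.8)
The plan is to read the theorem off the two lemmas, adding only the short step that converts ``no active search'' plus ``zero surplus'' into the quantitative claims about profit and the sale probability. First I would apply Lemma \ref{nosym}: in any symmetric equilibrium in which firms are visited, a consumer visits at most one firm, and since firms are visited by hypothesis, she visits exactly one. That settles the ``no active search'' assertion. The second lemma then supplies consumer surplus $=0$. So two of the four claims are immediate, and what remains is to show that each firm's profit is exactly $\mu$ and that the sale is made with certainty.

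For the profit level I would argue by a two-sided bound. Since a consumer does not continue searching, each visited firm effectively solves a static monopoly problem against a consumer whose outside option is $0$: choosing $F \in \mathcal{M}(G)$ and a price $p$, its revenue is $p \cdot \mathbb{P}_{F}(X \geq p)$. Because values are nonnegative, Markov's inequality gives $p \cdot \mathbb{P}_{F}(X \geq p) \leq \mathbb{E}_{F}[X] = \mu$, where the equality holds because every mean-preserving contraction of $G$ preserves the mean $\mu$; hence no firm can earn more than $\mu$. Conversely, the deviation identified in the second lemma---providing no information ($\delta_{X}(\mu)$) and charging $p = \mu$---earns exactly $\mu$ while leaving the consumer with payoff $0$, so it neither drives her away nor induces further search and is always available. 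Thus equilibrium profit is simultaneously at most and at least $\mu$, so it equals $\mu$; and under the no-information, price-$\mu$ policy the consumer's only realized value is $\mu$, which she weakly prefers to her outside option, so she buys with certainty.

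The delicate step is the last one: making the ``sale with certainty'' conclusion airtight. The consumer is exactly indifferent between buying at surplus $0$ and taking her outside option, so the certain sale rests on the standard convention that indifference is resolved in favor of purchase; without it one only obtains profit arbitrarily close to $\mu$, since a firm could post $\mu - \epsilon$ to restore strict incentives. I would also confirm internal consistency---that the conjectured pair $(\tilde{p}, \tilde{F}) = (\mu, \delta_{X}(\mu))$ is a fixed point, i.e.\ that these conjectures induce a (weakly negative) reservation value under which stopping at the first firm is optimal and no firm wishes to deviate, the latter being precisely the Markov bound above. Assembling these pieces yields the stated characterization.
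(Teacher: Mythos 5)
Your proposal is correct and matches the paper's route exactly: the paper derives Theorem \ref{main} simply by ``combining the two lemmata,'' which is precisely your decomposition (Lemma \ref{nosym} for no active search, the zero-surplus lemma for consumer surplus, plus glue for the profit level). Your added details---the Markov-inequality bound $p\,\mathbb{P}_{F}(X \geq p) \leq \mu$ for the upper bound, the $\left(\mu, \delta_{X}\left(\mu\right)\right)$ deviation for the lower bound, and the tie-breaking convention for the certain sale---are exactly the steps the paper leaves implicit, so this is a faithful (indeed more explicit) version of the same argument.
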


\subsection{Advertised Prices}

Perhaps surprisingly, the no active search result continues to hold, \textit{even if prices are advertised} and so search is directed. Specifically, we assume that a consumer observes the price set by each firm before starting her search, but still must visit a firm to observe its (expected) match value. As noted by \cite{az}, ``price comparison websites are now a major part of the retailing website.'' In many markets, consumers can obtain price quotes for free, before embarking on their costly searches. Information, on the other hand, is perhaps more difficult to advertise, so it is plausible that firms can advertise prices but not how much information they provide. We find that even when prices are posted, there are no symmetric equilibria with active search. 

We continue to use the weak PBE solution concept. Since price deviations are now observable, we allow consumers the freedom to have any belief about a firm's signal following an off-path price choice by that firm. In the same spirit of the assumption in the hidden prices case, we stipulate that a price deviation by a firm does not affect consumers' conjectures about the distributions chosen by firms that choose on-path prices. Similarly, during a consumer's search, her beliefs about the signals chosen by theretofore unvisited firms are unaffected by other firms' deviations.

\begin{lemma}\label{ppone}
In any symmetric equilibrium in which firms make strictly positive profits, there is no active search. That is, consumers visit no more than one firm.
\end{lemma}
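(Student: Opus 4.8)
The plan is to reuse the pooling deviation of Lemma \ref{nosym}, exploiting the fact that although prices are now posted, information policies remain \emph{hidden}, so a firm can perturb its signal without being detected by consumers. Suppose toward a contradiction that there is a symmetric equilibrium with posted price $p$, common distribution $F$, and active search. First I would record that active search forces $F$ to place strictly positive mass both strictly below and strictly above the stopping threshold $z + p$, where $z$ is the common reservation value defined by ($\star$) at the posted price $p$: mass below the threshold is needed for a visiting consumer to continue with positive probability, and mass above is needed for the right-hand side of ($\star$) to equal $c > 0$. The strictly-positive-profit hypothesis guarantees $p > 0$, which is precisely what makes converting a continuation into an immediate sale valuable; this hypothesis is also what will separate the present case from the marginal-cost (Bertrand) equilibrium that survives.

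Next I would set up the deviation. Fix intervals $[\ubar{a},\bar{a}]$ below the threshold and $[\ubar{b},\bar{b}]$ above it, each of positive $F$-mass, and let the firm keep its posted price $p$ while replacing $F$ by the fusion $\hat{F}$ obtained by collapsing the mass on $[\ubar{b},\bar{b}]$ together with a fraction $\epsilon$ of the mass on $[\ubar{a},\bar{a}]$ to their barycenter $\hat{x}$. For $\epsilon$ small, $\hat{x} > z + p$, and $\hat{F}$ is a mean-preserving contraction of $F$, hence lies in $\mathcal{M}(G)$. The crucial observation is that, because the signal is unobservable and the posted price is unchanged, every consumer continues to assign this firm reservation value $z$ and to apply the rule ``buy iff $x - p \ge z$''; consequently the set of consumers who visit the firm, the order in which they do so, and the threshold they use are all unaffected, and the sole effect of the deviation is to pool a fraction of the formerly below-threshold realizations into immediate sales at price $p$.

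Then I would compute the payoff change. Conditional on a visit, every realization weakly above $z + p$ already yields the firm its full price $p$, so the above-threshold mass contributes nothing to the change; writing $\alpha$ for the firm's average profit (under $F$) from realizations in $[\ubar{a},\bar{a}]$, the net change is $\epsilon \int_{\ubar{a}}^{\bar{a}} dF(x)\,(p - \alpha)$. It remains to argue $\alpha < p$, and this is where recall requires care: a consumer who draws a below-threshold value continues searching and, under recall, may return, but she returns and purchases here only when this firm ultimately delivers her best realized surplus, an event of probability strictly below one, so the firm's expected profit from such a realization is strictly less than $p$. Hence $\alpha < p$, the net change is strictly positive (using $p > 0$), and we obtain the desired contradiction.

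Finally I would extend to mixed strategies. Since prices are posted, a consumer conditions her behavior on the realized price she observes, so I would argue firm-by-firm after conditioning on the realized price: for any price $p'$ in the support at which the associated conditional distribution over values still exhibits active search, the pooling deviation above applies verbatim; and if no such $p'$ exists, then at some price in the support stopping occurs with probability one, and a no-information, price-raising deviation (as in the zero-surplus argument of the hidden-price section) takes over. The main obstacle I anticipate is not the algebra of the fusion but the verification that holding the posted price fixed genuinely \emph{freezes} the consumer's search problem—so that the hidden information change shifts only the realized stopping probability and never the composition or ordering of visitors—together with the recall subtlety needed to secure the strict inequality $\alpha < p$.
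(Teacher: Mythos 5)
Your central deviation---hold the posted price fixed and secretly replace the conditional distribution by a fusion that collapses above-threshold mass together with an $\epsilon$-sliver of below-threshold mass to a barycenter still above the threshold---is exactly the paper's argument, including the key observation that hiddenness of the signal freezes consumers' visiting and stopping behavior. However, your mixed-strategy case contains a genuine error: the branch ``if no such $p'$ exists, \ldots a no-information, price-raising deviation takes over'' imports the second case of Lemma \ref{nosym}, which is unavailable here. With advertised prices a firm cannot raise its price secretly: the deviation price is observed before consumers decide whom to visit, the solution concept allows them arbitrary (e.g., fully pessimistic) beliefs about a deviator's signal following an off-path price, and the deviator may consequently never be visited, so profitability is not guaranteed. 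That dichotomy was needed in Lemma \ref{nosym} only because with hidden prices ``active search'' is a property of the price-marginalized distribution $H$ and can occur even though no single conditional $\tilde{F}_{X|P}(\cdot|p')$ exhibits both stopping and continuation. With posted prices, consumers condition on realized prices, so active search already implies that some on-path price, visited with positive probability, has continuation with positive probability conditional on arrival---precisely the case where your fusion applies. The correct fix is to delete that branch, not to patch it.

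The second gap is that you work with the threshold $z + p$ throughout and never treat the consumer's outside option, i.e., the possibility of negative reservation values. Conditional on arrival, the stopping threshold is $\max\left\{0, z^{*}\right\} + p^{*}$, not $z^{*} + p^{*}$, and the paper's proof splits on the sign of $z^{*}$: for $z^{*} \geq 0$ it runs the fusion; for $z^{*} < 0$ it observes that such a firm, if visited at all, is the first and only firm visited (only the first visit is free, and afterwards the consumer's fallback is at least the outside option $0 > z^{*}$), so no active search flows through such a price. In your pure-strategy paragraph this omission is benign, because active search itself forces $z \geq 0$---though you never say so---but in the mixed case different realized prices induce reservation values of possibly different signs, and the on-path prices with $z^{*} < 0$ must be disposed of by this separate argument. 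Relatedly, since the realized threshold at a firm varies with the \emph{other} firms' posted prices, the fusion should target mass above the firm's own $z^{*} + p^{*}$ (with $z^{*} \geq 0$), which lies in the stopping set for every realization of the rivals' prices; ``applies verbatim'' glosses over exactly this point.
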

\begin{proof}
In any symmetric equilibrium, each firm chooses a joint distribution, $F(x, p)$, over values and prices, $\left[\ubar{x},\bar{x}\right] \times \left[\ubar{p},\bar{p}\right]$, where each conditional distribution over values, $F_{X|P}\left(x|P = p\right) \in \mathcal{M}\left(G\right)$. Given a firm's choice of (on-path) price $p^{*} \in \left[\ubar{p},\bar{p}\right]$, its (on-path) choice of distribution over values $F_{X|P}\left(x|P = p^{*}\right)$ must yield a maximal payoff for the firm, given the strategies of the other firms.

For each price, the corresponding conditional distribution over values corresponds to a reservation value, $z$, which is pinned down by Equation \ref{eq1}. Define $\mathcal{Z}$ to be the set of all reservation values that are induced on-path. Joint distribution $F(x, p)$ induces a distribution over reservation values $\Phi(z)$. Note that, in contrast to the previous section's case, in which both prices and distributions are hidden, the observability of firms' price choices means that the reservation values chosen by firms may be random.

Since firms are making strictly positive profits, any on-path price, $p^{*}$, must itself be strictly positive. Denote by $z^{*} \in \mathcal{Z}$ the corresponding reservation value. Evidently, a firm's payoff, conditional on being visited, from any value $x \geq \max\left\{0,z^{*}\right\} + p^{*}$, is $p^{*}$. 

First, suppose that there exists an on-path price $p^{*}$ and conjectured distribution $\tilde{F}_{X|P}\left(x|p^{*}\right)$ that induces a $z^{*}$ that is weakly greater than $0$. As in the proof for Lemma \ref{nosym}, suppose for the sake of contradiction that conditional on her arrival at the firm a consumer strictly prefers to stop and strictly prefers to continue her search (or select her outside option) with strictly positive probability. However, all values in the stopping set yield a payoff of $p^{*}$ and all beliefs in the continuation set yield a payoff that is strictly below $p^{*}$. Consequently, as in the proof for Lemma \ref{nosym}, after posting price $p^{*}$, a firm can (secretly) deviate by providing slightly less information and fusing a subset of the beliefs in the continuation and stopping sets.

Second, suppose that there exists an on-path $z^{*}$ that is strictly less than $0$. In such an equilibrium, if a firm is visited, it is the first and only firm that is visited by the consumer, since the first visit is the only one that does not impose on the consumer a search cost. 
\end{proof}

The reason why there may exist an on-path $z^{*}$ that is strictly less than $0$ is because consumers do not incur search costs at the first firm they visit. If they did incur such costs at the first firm, then all on-path $z^{*}$ would have to be weakly positive, since otherwise consumers would strictly prefer their outside option to visiting firms with such $z$ values.

\begin{example}\label{exa2}
Let $n = 2$ and the match value of each firm be a Bernoulli random variable with mean $1/2$. Let the search cost, $c = 1/16$. If firms could not choose how much information to provide--and were forced to provide full information--and could only compete by posting prices, then it is straightforward to see that there is an equilibrium in which each firm chooses a continuous distribution over prices that induces the distribution over reservation values, $\Phi\left(z\right)$, given by
\[\label{e1}\tag{$1$}\Phi\left(z\right) = \frac{7}{7-8z}-1, \quad \text{on} \quad \left[0,\frac{7}{16}\right] \text{ .}\]
The associated range of prices is $\left[7/16, 7/8\right]$. Now, let us maintain the assumption that prices are posted but restore the ability of firms to choose the distribution over match values. We will establish that given the conjectured distribution over reservation values stated in Expression \ref{e1}, there are on-path prices after which firms strictly prefer to deviate and choose a distribution other than the Bernoulli distribution. 

Indeed, let us characterize the optimal distribution over values following a choice of price $p^{*} = 7/16$ by a firm. Given a consumer's conjectured $\Phi$, because the firm has chosen the lowest on-path price, it will be visited first. This makes it easy to write the firm's payoff as a function of its posterior value, $V(x)$. It is
\[V(x) = \begin{cases}
0, \qquad &0 \leq x < \frac{7}{16}\\
\left(\frac{7}{21-16x}\right)\left(\frac{7}{16}\right), \qquad &\frac{7}{16} \leq x \leq \frac{7}{8}\\
\frac{7}{16}, \qquad &\frac{7}{8} \leq x \leq 1\\
\end{cases} \text{ .}\]
From \cite{kam}, we know that the optimal distribution over values can be obtained by concavifying $V$. $V$ and its concavification, $\hat{V}$, are depicted in Figure \ref{fig2}, in which the distribution over posteriors corresponding to full information (the Bernoulli distribution over values) is also portrayed. Evidently, the latter is strictly suboptimal.

\begin{figure}
    \centering
    \includegraphics[scale=.12]{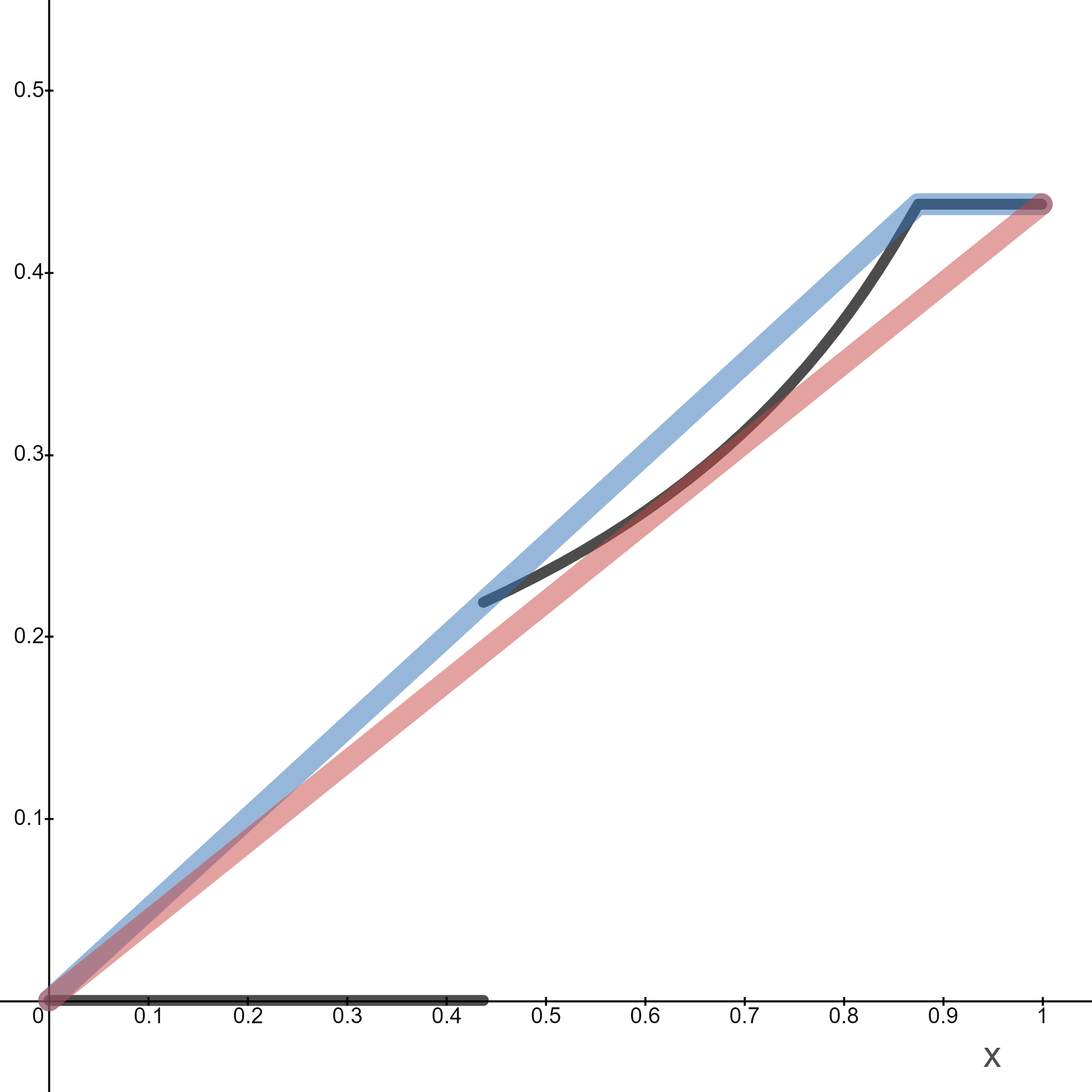}
    \caption{$V(x)$ (in black); its concavification, $\hat{V}\left(x\right)$ (in blue); and the splitting corresponding to the Bernoulli distribution (in red); from Example \ref{exa2}.}
    \label{fig2}
\end{figure}
\end{example}

The next lemma restricts what can happen at equilibrium following firms' price choices.

\begin{lemma}\label{aux}
In any equilibrium in which firms make strictly positive profits we must have the following:
\begin{enumerate}
    \item If $1 \geq p^{*} > \mu$ is chosen on-path, distribution $F_{X|P}\left(x|p^{*}\right)$ has a mass point of size $q^{*} > 0$ on $p^{*}$ and has no support strictly above $p^{*}$; 
    \item If $\mu \geq p^{*} > 0$ is chosen on-path, distribution $F_{X|P}\left(x|p^{*}\right)$ must have support entirely above $p^{*}$. A firm is selected with certainty, conditional on being visited.
\end{enumerate}
\end{lemma}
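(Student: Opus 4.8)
The plan is to reduce each firm's choice of information policy, given its posted price, to a single-agent persuasion problem over $\mathcal{M}(G)$, and then read off the two cases with a short fusion argument.

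First I would establish the reduction, which is common to both parts. By Lemma \ref{ppone}, in any equilibrium with strictly positive profits there is no active search, and (as its proof shows) every on-path reservation value is strictly negative. Hence a consumer who has chosen to visit a firm posting $p^{*}$ and who draws value $x$ compares buying, worth $x-p^{*}$, only against her outside option $0$: continuing to search a rival is dominated because every rival's reservation value lies below $0$. So she purchases if and only if $x \geq p^{*}$. The decisive observation is that a firm's distribution is secret---only its price is posted---so a secret deviation in $F_{X|P}(\cdot \mid p^{*})$ cannot alter the consumer's choice of which firm to visit, which is pinned down entirely by her conjectures. The probability that the firm is visited is therefore a constant from its standpoint and factors out of its problem. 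Conditional on a visit, the firm's payoff from realized value $x$ is $p^{*}$ when $x \geq p^{*}$ and $0$ otherwise, so the on-path $F_{X|P}(\cdot \mid p^{*})$ must maximize $p^{*}\,\mathbb{P}_{F}\!\left(X \geq p^{*}\right)$ over $F \in \mathcal{M}(G)$.

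For part (1), with $p^{*} > \mu$, note that the mean of any feasible $F$ equals $\mu < p^{*}$, so every feasible $F$ places strictly positive mass below $p^{*}$. Suppose, toward a contradiction, that an optimal $F$ also placed mass strictly above $p^{*}$. I would then fuse a small piece of this above-$p^{*}$ mass with a small piece of the below-$p^{*}$ mass at their common barycenter, chosen to equal $p^{*}$ (feasible by taking the two pieces small, exactly as in the fusion construction of Lemma \ref{nosym}). This is a mean-preserving contraction, hence stays in $\mathcal{M}(G)$, and it strictly raises the objective, since it converts below-$p^{*}$ mass (worth $0$) into mass at $p^{*}$ (worth $p^{*}$) while the relocated above-$p^{*}$ mass retains value $p^{*}$. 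This contradicts optimality, so the optimal $F$ has no support strictly above $p^{*}$. Sales then arise only at $x = p^{*}$, and since profits are strictly positive by hypothesis, the atom there has size $q^{*} > 0$.

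For part (2), with $\mu \geq p^{*} > 0$, the firm's profit is bounded above by $p^{*}$ because the sale probability cannot exceed $1$. This bound is attained by the no-information distribution $\delta_{X}\!\left(\mu\right) \in \mathcal{M}(G)$, which places all mass at $\mu \geq p^{*}$ and thus yields a sure sale and profit exactly $p^{*}$. Consequently any optimizer must also deliver a sure sale, i.e.\ $\mathbb{P}_{F}\!\left(X \geq p^{*}\right) = 1$, so its support lies weakly above $p^{*}$ and the firm is selected with certainty conditional on a visit. I expect the main obstacle to be the reduction in the second paragraph---arguing cleanly that the secrecy of the information policy decouples the competitive, conjecture-driven visit decision from the firm's actual choice of $F$, and that the absence of active search fixes the purchase threshold at exactly $p^{*}$. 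Once that reduction is secured, part (1) is the fusion argument and part (2) is the elementary upper bound attained by no disclosure.
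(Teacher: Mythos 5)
There is a genuine gap, and it sits exactly where you predicted the main obstacle would be: the reduction in your second paragraph rests on a false premise. You assert that the proof of Lemma \ref{ppone} shows that ``every on-path reservation value is strictly negative.'' It shows no such thing: that proof establishes only that there is no active search. Its first case explicitly contemplates an on-path price whose reservation value $z^{*}$ is weakly positive, and what it rules out there is that a visiting consumer both strictly stops and strictly continues with positive probability---not the existence of such a $z^{*}$. In fact, non-negative on-path reservation values cannot be excluded at this stage: the minimal reservation value inducible at price $p$ is $\mu - c - p$ (attained by $\delta_{X}\left(\mu\right)$), so any on-path price $p^{*} \leq \mu - c$ necessarily carries $z^{*} \geq 0$. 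This is precisely why the paper's own proof of part (2) splits into the cases $z^{*} \geq 0$ and $z^{*} < 0$ and treats $\mu \geq p^{*} > \mu - c$ and $\mu - c \geq p^{*} > 0$ separately, and why its proof of part (1) must first \emph{derive} $z^{*} < 0$ rather than assume it. Once your premise fails, the purchase threshold of exactly $p^{*}$ fails with it: conditional on being visited (first), the consumer buys outright if and only if $x - p^{*} \geq \max\{0,\bar{z}\}$, where $\bar{z}$ is the best reservation value among unvisited rivals, which you have not shown to be non-positive. This infects every subsequent step: in part (1), your fused atom placed at barycenter exactly $p^{*}$ need not generate a sale, and the claim that the optimal atom sits at $p^{*}$ (rather than at the true threshold) is precisely what needs proving; in part (2), $\delta_{X}\left(\mu\right)$ attains a sure sale only if $\bar{z} \leq \mu - p^{*}$ almost surely, which again is not established. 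So your objective $p^{*}\,\mathbb{P}_{F}\left(X \geq p^{*}\right)$ is not the firm's problem unless more is proved first.

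The contrast with the paper's route is instructive. The paper never pins the threshold at $p^{*}$ a priori; its fusion arguments use only two robust facts---any value $x \geq z^{*} + p^{*}$ (the firm's \emph{own} reservation value plus its price) earns $p^{*}$, and any value $x < p^{*}$ earns $0$---and keep the fused mass above $z^{*} + p^{*}$, so whatever happens in between is irrelevant. That is how it first forces $z^{*} < 0$ when $p^{*} > \mu$ and only then locates the atom at $p^{*}$, and how it handles part (2) via the bound $z^{*} \geq \mu - c - p^{*}$. Your reduction is repairable, but it requires an argument you did not give: if the visited firm's distribution places positive mass strictly below its price, then no active search, combined with the independence of the firm's realized value from rivals' posted prices, forces all rival reservation values to be weakly negative conditional on that firm being visited first---and only then does the threshold collapse to $p^{*}$. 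Establishing that dichotomy (either the firm sells with certainty conditional on a visit, or every relevant rival reservation value is non-positive) is the real content of the lemma, and it is the piece missing from your proof.
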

\begin{proof}
If $1 \geq p^{*} > \mu$ is chosen on-path, then if the associated reservation value, $z^{*} \geq 0$, the conjectured distribution must be such that $\int_{\ubar{x}}^{\bar{x}}d\tilde{F}_{X|P}\left(x|p^{*}\right) > 0$ for some $\Bar{x} \geq \ubar{x} > z^{*} + p^{*}$. Moreover, because $p^{*} > \mu$, the conjectured distribution must be such that $\int_{\ubar{w}}^{\bar{w}}d\tilde{F}_{X|P}\left(x|p^{*}\right) > 0$ for some $\ubar{w} \leq \bar{w} < p^{*}$. A firm's payoff from any belief $x \geq z^{*} + p^{*}$ is $p^{*}$ and from any belief $x < p^{*}$ is $0$, so it can deviate profitably by fusing the measure on $\left[\ubar{x}, \bar{x}\right]$ with a fraction, $\epsilon > 0$, of the measure on $\left[\ubar{w}, \bar{w}\right]$ (collapsing them to their barycenter). We conclude that  $z^{*} < 0$. 

Define \[F_{X|P}^{-}\left(a|p^{*}\right) \coloneqq \sup_{w < a}F_{X|P}\left(w|p^{*}\right) \text{ ,}\]
and observe that $F^{-}_{X|P}\left(p^{*}|p^{*}\right) < 1$, or else a firm would make $0$ profit from choosing $p^{*}$. Furthermore, any values $x < p^{*}$ yield a firm a profit of $0$ and since $p^{*} > \mu$, $F^{-}_{X|P}\left(p^{*}|p^{*}\right) > 0$. Consequently, a firm must have $F_{X|P}\left(p^{*}|p^{*}\right) = 1$ since otherwise it could (as in the previous paragraph) fuse a positive measure of values strictly below $p^{*}$ with a positive measure of values strictly above $p^{*}$ and obtain a higher payoff. 

Define \[q^{*} = q\left(p^{*}\right) \coloneqq F_{X|P}\left(p^{*}|p^{*}\right)  - F^{-}_{X|P}\left(p^{*}|p^{*}\right) \text{ ,}\] i.e., $q^{*}$ is the size of the mass point that $F_{X|P}\left(x|p^{*}\right)$ places on $p^{*}$. Accordingly, $z^{*} = -c/q^{*} < 0$. 

If $\mu \geq p^{*} > \mu - c$ is chosen on-path, then if the associated reservation value, $z^{*} \geq 0$, a firm must be selected with certainty, conditional on being visited. This is because $z^{*} \geq 0 > \mu - c - p^{*}$, which is the minimal reservation value that can be induced. Thus, if a firm were not selected with certainty, conditional on being visited, it could always fuse values strictly below and strictly above $z^{*} + p^{*}$. Accordingly, all values must be weakly greater than $p^{*}$, i.e., $F^{-}_{X|P}\left(p^{*}|p^{*}\right) = 0$. If the associated reservation value, $z^{*}$ is strictly negative, the result is trivial since a firm could always just provide no information (choose the degenerate distribution $\delta_{X}\left(\mu\right)$) and be selected with certainty, conditional on being visited.

Finally, if $\mu - c \geq p^{*} > 0$, then the minimum reservation value that can be induced is $\mu - c - p^{*} \geq 0$. Either $F^{-}_{X|P}\left(z^{*} + p^{*}|p^{*}\right) = 0$, in which case $z^{*} = \mu - c - p^{*}$ and so a firm is clearly selected for sure, conditional on being visited; or $F^{-}_{X|P}\left(z^{*} + p^{*}|p^{*}\right) > 0$ and so $z^{*} > \mu - c - p^{*}$ and $F_{X|P}\left(z^{*} + p^{*}|p^{*}\right) < 1$. In that case either a firm is selected for sure or it can deviate profitably by fusing portions of the measure strictly above and below $z^{*} + p^{*}$.
\end{proof}

Next, because there is no active search, we find that firms cannot make positive profits in any symmetric equilibrium. 

\begin{lemma}\label{nopos}
There are no symmetric equilibria in which firms make nonzero profits. Equivalently, there are no symmetric equilibria in which firms charge any price other than $0$.
\end{lemma}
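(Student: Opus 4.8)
The plan is to argue by contradiction, beginning with pure strategies. Suppose a symmetric equilibrium yields strictly positive profit $\Pi$. By Lemma \ref{ppone} there is no active search, so each consumer visits exactly one firm: the one whose posted price is most attractive given her (correct, on-path) conjecture about that firm's information. Write $W(p)$ for the consumer's expected surplus from visiting a firm posting price $p$, and let $W^{*}$ be the highest value offered on path, i.e.\ equilibrium consumer surplus. First I would record two belief-free bounds. The total surplus created at the single visited firm never exceeds the expected match value $\mu$, so the visited firm's profit is at most $\mu - W^{*}$; and since only the visited firm earns anything, total expected profit equals the visited firm's expected profit, so by symmetry $n\Pi \le \mu - W^{*}$, giving $\Pi \le (\mu - W^{*})/n$. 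In particular, if $W^{*} = \mu$ then $\Pi \le 0$, contradicting positive profit; hence I may assume $W^{*} < \mu$.

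The heart of the argument is a Bertrand-style undercutting deviation, and the delicate point is to show it cannot be defeated by pessimistic off-path beliefs. Consider a firm that secretly posts the off-path price $p' = \mu - W^{*} - \epsilon$ for small $\epsilon > 0$ and provides no information, i.e.\ chooses $\delta_{X}(\mu)$. Because every feasible information policy is a mean-preserving contraction of $G$, any belief the consumer could rationally entertain about the deviator's signal has mean $\mu$; so by Jensen's inequality applied to the convex map $x \mapsto \max\{x - p', 0\}$, her value of visiting the deviator is at least $\mu - p' = W^{*} + \epsilon$, \emph{however} pessimistic her conjecture. This strictly exceeds $W^{*}$, so she visits the deviator with certainty. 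Once there, the no-information policy makes her value $\mu > p'$, so she buys for sure and the firm earns $p' = \mu - W^{*} - \epsilon$. Since $\mu - W^{*} - \epsilon > (\mu - W^{*})/n \ge \Pi$ for $n \ge 2$ and $\epsilon$ small, the deviation is strictly profitable, a contradiction. This mean-preserving/Jensen observation—that Bayes-plausibility forces the value of a visit to a price-$p$ firm to be at least $\mu - p$ regardless of beliefs—is precisely what neutralizes the off-path-belief freedom that sustains positive profit in related hidden-price and obfuscation models, and I expect getting it stated cleanly to be the main obstacle.

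It remains to sharpen ``no positive profit'' into ``price equal to $0$'' and to dispatch mixing. For the first, once $\Pi = 0$, the visited firm still sells with positive probability (by Lemma \ref{aux}, a Case-2 firm with $p^{*} \le \mu$ sells for sure and a Case-1 firm with $p^{*} > \mu$ sells with probability $q^{*} > 0$), so any strictly positive posted price would generate strictly positive profit; hence every on-path price must equal $0$. For mixing, I would run the same deviation against the supremum $\bar W$ of the offered values, which yields deviation profit approaching $\mu - \bar W$ and hence requires $\Pi \ge \mu - \bar W$ in equilibrium. Standard Bertrand-with-mixing reasoning then closes the gap: the induced distribution of offered values can carry no atom at the bottom of its support (else a firm profitably leapfrogs the tied mass), so an offer approaching the infimum of the support wins with vanishing probability and earns arbitrarily little; indifference across the support then forces $\Pi = 0$, the desired contradiction. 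I would present the pure-strategy case in full and treat the mixed case as this routine adaptation.
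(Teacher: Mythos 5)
Your proposal is, at its core, the same argument as the paper's: the pivotal deviation is a slight price undercut paired with the uninformative distribution $\delta_{X}\left(\mu\right)$, and the pivotal observation---that Bayes-plausibility forces every conceivable off-path belief to have mean $\mu$, so pessimistic conjectures cannot make the deviator unattractive---is precisely the paper's ``minimum reservation value'' point. You organize the contradiction around consumer surplus $W^{*}$, while the paper organizes it around the maximal on-path price $\hat{p}$ and splits into $\hat{p}\leq\mu$ versus $\hat{p}>\mu$ via Lemma \ref{aux}; your mixed-strategy sketch (no atom at the bottom of the offered-value distribution, so indifference forces $\Pi=0$) is the contrapositive of the paper's step (positive profit forces an atom at $\hat{p}$, which is then profitably undercut). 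These are the same Bertrand logic in different clothing.

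There is, however, a genuine gap in your central step. You apply Jensen to $x\mapsto\max\{x-p',0\}$ to conclude that the \emph{value of a single visit} to the deviator is at least $\mu-p'=W^{*}+\epsilon$, and then infer ``this strictly exceeds $W^{*}$, so she visits the deviator with certainty.'' That inference is invalid as stated, for two reasons. First, in sequential search the order of visits is governed by Weitzman reservation values, where $z$ solves $c=\int\left(x-p-z\right)^{+}dF(x)$, not by single-visit values: a firm with a dispersed distribution (say, full information at a low price) can have a reservation value far above its visit value minus $c$, so ranking firms by visit value is not the consumer's optimal rule. Second, once the deviator exists, the payoff from visiting an on-path firm first is not $W^{*}$ but $W^{*}$ \emph{plus the option value} of continuing on to the deviator, so ``deviator's visit value exceeds $W^{*}$'' does not by itself imply she goes to the deviator first. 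The repair is exactly the paper's formulation: apply the Jensen bound to the reservation-value equation itself, giving $z'\geq\mu-p'-c$ under \emph{any} mean-$\mu$ belief, and use Lemma \ref{aux} (legitimately, since you are under the positive-profit hypothesis) to pin down on-path reservation values at $z^{*}=\mu-p^{*}-c=W^{*}-c$ when $p^{*}\leq\mu$ and $z^{*}=-c/q^{*}<0$ when $p^{*}>\mu$; then the deviator's index strictly dominates and Weitzman's rule delivers the visit. The same substitution of reservation values for ``offered values'' is needed in your mixed-strategy paragraph. Separately, your closing step invokes Lemma \ref{aux} \emph{after} setting $\Pi=0$, but that lemma's hypothesis is strictly positive profits, so the invocation is circular; the ``equivalently, price $0$'' clause should instead be handled as the paper does, by noting that at a zero-profit equilibrium any firm posting $p>0$ must never be visited, a configuration that Lemma \ref{minres} and the proof of Theorem \ref{main2} then eliminate.
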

\begin{proof}

Let $\hat{p}$ be the maximal price that is chosen on-path, with associated reservation value $\hat{z}$. 

First, suppose that $\hat{p} \leq \mu$. From Lemmata \ref{ppone} and \ref{aux}, following any on-path $p$, a firm must be chosen for sure, conditional on being visited. Clearly, $\hat{p}$ must be chosen with strictly positive probability on-path, since there is no active search. Otherwise, choosing that price would guarantee that that firm is never visited, yielding a profit of $0$.

Consequently, all values, $x$, must be such that $x - \hat{p} \geq \max\left\{0,\hat{z}\right\}$. Thus, $\hat{z} = \mu - c - \hat{p}$. But then a firm can deviate profitably to some price $\hat{p} - \eta$, where $\eta > 0$, and the degenerate distribution $\delta_{X}\left(\mu\right)$. Because $\mu - c - \hat{p} + \eta$ is the minimum reservation value such a deviation could induce, the deviating firm obtains a discrete jump up in its payoff. 

Second, suppose that $\hat{p} > \mu$. The probability that firms are choosing prices that are weakly greater than $\mu$ must be strictly positive (or else choosing $\hat{p}$ would result in $0$ profit for a firm). From a consumer's viewpoint, all on-path prices $p^{*} \geq \mu$ are equivalent. Indeed, for all such $p^{*} \geq \mu$, $F_{X|P}\left(p^{*}|p^{*}\right) = 1$. 

A firm's payoff from choosing any $p^{*} \geq \mu$, conditional on being visited, is $q\left(p^{*}\right)p^{*}$, where, recall, $q\left(p^{*}\right)$ is the size of the mass point on $p^{*}$. Moreover, by definition, $q\left(p^{*}\right)p^{*} \leq \mu$. But then, a firm can deviate profitably by choosing some price $p = \mu - \epsilon$, for a small but strictly positive $\epsilon$, and the degenerate distribution $\delta_{X}\left(\mu\right)$. No matter a consumer's belief about the firm's distribution after observing this price, a deviating firm must still be visited (and thus purchased from eventually) before any firm that is choosing $p^{*} \geq \mu$. Accordingly, for a sufficient small $\epsilon > 0$, a firm will receive a discrete jump in its payoff and so there exists a profitable deviation.

\end{proof}

\begin{lemma}\label{minres}
There exist no symmetric equilibria with active search in which firms make zero profits. 
\end{lemma}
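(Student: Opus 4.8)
The plan is to reduce the statement to the price-$0$ case using the earlier results and then exhibit a profitable deviation of exactly the flavor used in Lemma \ref{nosym} and Lemma \ref{nopos}. By Lemma \ref{nopos}, zero profits are equivalent to every firm posting price $0$, so it suffices to rule out active search when all firms post price $0$. Suppose, toward a contradiction, that there is such a symmetric equilibrium: each firm posts price $0$, the conjectured distribution over values is $\tilde{F}$ (with mean $\mu$), and there is active search, i.e., the reservation value $z$ defined by Equation \ref{eq1} (with $\tilde{p}=0$) satisfies that $\{x<z\}$ and $\{x\geq z\}$ each occur with strictly positive $\tilde{F}$-probability. (Note active search forces $\mu>c$, since otherwise $z\leq\mu-c\leq 0$ and strict continuation has probability $0$.)

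The first real step is the structural fact that active search forces $\tilde{F}$ to place strictly positive probability on values \emph{strictly below} $\mu-c$. To see this, set $\phi(t)\coloneqq\int_{t}^{1}(x-t)\,d\tilde{F}(x)=\int_{t}^{1}(1-\tilde{F}(x))\,dx$, which is continuous and strictly decreasing wherever positive, with $\phi(z)=c$. If $\tilde{F}$ placed no mass below $\mu-c$, then $x-(\mu-c)\geq 0$ holds $\tilde{F}$-almost surely, so $\phi(\mu-c)=\int_{0}^{1}(x-\mu+c)\,d\tilde{F}(x)=\mu-(\mu-c)=c$; strict monotonicity would then give $z=\mu-c$, whence $\{x<z\}$ has probability $0$, contradicting active search. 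The same computation shows that with positive mass below $\mu-c$ one in fact has $z>\mu-c$.

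Next I would construct the deviation: a single firm posts a small price $p=\epsilon>0$ and provides no information, choosing the degenerate distribution $\delta_{X}(\mu)$. Because $\delta_{X}(\mu)$ is the greatest mean-preserving contraction, it minimizes $\int(x-k)^{+}\,d F(x)$ over $F\in\mathcal{M}(G)$ for every $k$ and hence induces the \emph{lowest} reservation value at price $\epsilon$, namely $\mu-\epsilon-c$ by Equation \ref{eq1}; consequently, whatever off-path belief the equilibrium assigns to the deviating firm, its induced reservation value is at least $\mu-\epsilon-c$. Since $z>\mu-c>\mu-\epsilon-c$, this firm is the least attractive and, under the consumer's Weitzman ordering, is searched last. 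She reaches it only after visiting all $n-1$ on-path firms without stopping, and does so whenever her best value in hand $v$ falls below the deviating firm's reservation value; in particular she reaches it whenever $v<\mu-\epsilon-c$, an event of probability at least $\big(\tilde{F}((\mu-\epsilon-c)^{-})\big)^{\,n-1}$, which is strictly positive for all small $\epsilon$ by the structural fact (here $n$ is finite). Conditional on this event she observes the deterministic value $\mu$, and since $\mu-\epsilon>v$, she purchases with certainty. Thus the deviation earns at least $\epsilon\big(\tilde{F}((\mu-\epsilon-c)^{-})\big)^{\,n-1}>0$, contradicting zero profits and completing the proof.

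I expect the main obstacle to be twofold. First, the structural claim that active search implies positive mass strictly below $\mu-c$ is precisely what guarantees the deviating firm is reached with positive probability; without it the ``searched last'' argument is vacuous, so I would want that computation airtight. Second, because price deviations are observable and the solution concept permits arbitrary off-path beliefs about the deviating firm's signal, the deviation must be profitable against the \emph{least} favorable such belief. This is exactly why choosing $\delta_{X}(\mu)$ is the right move: it is simultaneously the belief that makes the firm hardest to reach and the true distribution realized upon a visit, so on a positive-probability event the consumer both reaches the firm and buys with certainty, uniformly over all admissible off-path beliefs.
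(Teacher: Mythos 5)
Your proof is correct and takes essentially the same route as the paper's: reduce to the all-price-zero case via Lemma \ref{nopos}, show that active search forces $\tilde{F}$ to place positive mass strictly below $\mu-c$, and then deviate to a small positive price with the uninformative distribution $\delta_{X}\left(\mu\right)$, which even under the most pessimistic off-path belief (reservation value $\mu-c-p$) is still visited and selected with strictly positive probability. Your explicit $\phi$-computation merely spells out the structural step that the paper states tersely, so the two arguments coincide in substance.
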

\begin{proof}
Obviously, in any symmetric equilibrium in which firms make zero profits, then if a firm chooses any price $p > 0$ it must be visited with probability $0$. Thus, we impose that firms post prices $p = 0$. If $\mu - c > 0$, it suffices to to show that there are no equilibria in which firms choose distributions over values that induce reservation values strictly greater than $\mu - c$.

Suppose for the sake of contradiction that such an equilibrium exists. Then, a firm's distribution over values must be such that the probability of a realization that is weakly below $\mu - c - \gamma$ (with $\gamma > 0$) must occur with probability $\alpha > 0$. However, then a firm can deviate profitably to some price $p$ that satisfies $0 < p < \min\left\{\gamma, \mu-c\right\}$ and the degenerate distribution $\delta_{X}\left(\mu\right)$. Even should a consumer assign it the most pessimistic belief about its distribution, yielding a reservation value of $\mu - c- p$, the firm would still be visited (and selected) with strictly positive probability, yielding a positive profit. 

If $\mu -c \leq 0$, there is no active search on-path since any value a consumer obtains at the first firm is at least weakly greater than the reservation values of the other firms.

\end{proof}
At last, we arrive at the second main result.

\begin{theorem}\label{main2}
Any symmetric equilibrium must be one in which each firm posts price $p = 0$ and chooses a distribution that induces the minimum reservation value, $\mu - c$.
\end{theorem}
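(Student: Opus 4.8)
The plan is to read Theorem \ref{main2} off the four preceding lemmata, adding only the short computation that turns ``no active search'' into an exact value for the reservation value. The price half of the statement is immediate from Lemma \ref{nopos}: every symmetric equilibrium yields zero profits, and since any strictly positive price leaves a firm unvisited (hence earning nothing), each firm must post $p = 0$. Combining Lemma \ref{nopos} with Lemma \ref{minres} then rules out active search, because Lemma \ref{minres} excludes active search exactly in the zero-profit regime that Lemma \ref{nopos} forces us into.

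It remains to identify the induced reservation value. With $p = 0$ a consumer stops at her costless first visit precisely when the realized value clears the threshold, $x_{1} \geq z$, so the absence of active search is equivalent to the (on-path, possibly mixed) distribution $F$ placing all of its mass on $[z,1]$. Substituting this into Equation \ref{eq1} with $\tilde{p} = 0$ collapses the truncated integral, since every realized value lies above $z$:
\[c = \int_{z}^{1}(x - z)\,dF(x) = \int_{0}^{1}(x-z)\,dF(x) = \mu - z \text{ ,}\]
so $z = \mu - c$. The same identity shows $\mu - c$ is the least reservation value any $F \in \mathcal{M}(G)$ can induce: convexity of $x \mapsto (x-z)^{+}$ gives $\int_{0}^{1}(x-z)^{+}\,dF(x) \geq (\mu - z)^{+}$, with equality for the degenerate distribution $\delta_{X}(\mu)$, so any $F$ charging mass below $\mu - c$ induces a reservation value strictly above $\mu - c$. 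Thus ``no active search'' and ``reservation value equal to the minimum $\mu - c$'' are one and the same, and every on-path distribution must induce $\mu - c$.

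The one point I expect to require care is the boundary case $\mu - c \leq 0$, where the reservation value sits weakly below the support of every feasible distribution. There I would observe that the displayed computation still returns $z = \mu - c$, because the lower limit $\mu - c$ falls at or below $0$ and so the truncation does nothing; and that no active search holds automatically, since any first draw satisfies $x_{1} \geq 0 \geq \mu - c$ and thus clears the stopping threshold. Consequently the theorem holds verbatim in both regimes, with all the economic content concentrated in the case $\mu - c > 0$.
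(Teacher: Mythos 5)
Your necessity argument is correct and is essentially the paper's own: the paper disposes of this direction in one line (``uniqueness follows from Lemmata \ref{ppone}, \ref{nopos}, and \ref{minres}''), and what you add is the step the paper leaves implicit, namely that ``no active search'' at $p = 0$ forces the on-path distribution to put all of its mass weakly above $z$, whence Equation \ref{eq1} collapses to $c = \mu - z$, i.e., $z = \mu - c$. Your Jensen-type argument that $\mu - c$ is the least inducible reservation value, with equality exactly when $F$ has no mass below $\mu - c$, is also correct and matches the logic used inside the proof of Lemma \ref{minres}; the boundary case $\mu - c \leq 0$ is handled correctly as well.

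What you omit is the entire first half of the paper's proof: the verification that the described profile actually \emph{is} an equilibrium. Read literally, the theorem is a ``must be'' statement and your argument covers that reading, but without existence the characterization risks being vacuous, and the paper treats the verification as part of the theorem's content (the subsequent welfare discussion presumes it). Moreover, this verification is not a formality. A deviation in the distribution alone (keeping $p = 0$) is harmless, but a price deviation is \emph{observable}, and whether it is profitable depends on the consumer's off-path belief about the deviator's still-hidden signal. The paper stipulates the pessimistic belief that any firm posting $p \neq 0$ provides no information, so its perceived reservation value is $\mu - c - p < \mu - c$ and it is never visited. That stipulation does real work: under an optimistic off-path belief (e.g., that a deviator provides full information), the deviator's perceived reservation value can strictly exceed $\mu - c$, in which case it is visited \emph{first}, and by actually choosing $\delta_{X}\left(\mu\right)$ and a price $0 < p < c$ it retains the consumer and earns strictly positive profit, breaking the equilibrium. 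So if the theorem is to assert that this profile is the unique symmetric equilibrium rather than that the set of symmetric equilibria is contained in (possibly empty) this class, you need to add the existence step, including an explicit specification of off-path beliefs.
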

\begin{proof}
First, we establish that such purported equilibria are equilibria. Trivially, a firm cannot deviate profitably by choosing a different distribution over values (and keeping price $p = 0$). Next, we stipulate that a consumer assumes that any firm that deviates to a price $p \neq 0$ chooses a completely uninformative distribution. Accordingly, that firm will never be visited and therefore such a deviation is not profitable.

Second, uniqueness follows from Lemmata \ref{ppone}, \ref{nopos}, and \ref{minres}.
\end{proof}

In any symmetric equilibrium, each firm's profit is $0$ and each consumer's welfare is just $\mu$ (they do not incur a search cost, $c$, since that applies only to searches beyond the first firm). Although consumers obtain some surplus, there is no benefit from increased competition, i.e., a market with two firms is just as good for a consumer as one with infinitely many.

\section{Discussion}

\cite{asher} provides a compelling resolution for the Diamond paradox. Product differentiation and imperfect information allow for more realistic and empirically-relevant equilibria in which consumers search and in which competitive forces have real effects on the pricing decisions of firms. Indeed, a key aspect of Wolinsky's model is that consumers obtain information about their match values upon visiting firms. 

Crucially, that information \textit{must come from somewhere} and in particular, firms have a say in how much information a consumer's visit will glean about their respective products. This paper suggests a weakness in using imperfect information and horizontal differentiation alone to generate competitive pricing and active search in search models. Namely, if firms may choose how much information to provide about their products and that information is unadvertised then the Diamond paradox reemerges, despite the market's heterogeneity. Modifications to the model--like, e.g., assuming that a fraction of consumers can search without cost \'a la \cite{sta}--are needed to restore active search.

It is important to keep in mind that, just as the Diamond result requires that consumers learn prices only after paying a search cost, the hidden nature of the firms' information choices is essential to the results that we encounter here. If information itself can be advertised--i.e., consumers can observe each firm's chosen distribution over values without paying a visit cost--there are equilibria in which consumers search and obtain positive surplus. This situation with posted information is the subject of \cite{whau}. 

Furthermore, it seems quite plausible that although firms can advertise information policies \textit{partially}, they have no way of fully specifying or committing to more information than a certain (minimal amount). Consequently, the strong incentives for firms to deviate and under-provide information we find here suggest that firms will not provide more information than they can (explicitly) commit to. Casual observation suggests that this is somewhat accurate--there are many anecdotes of car dealerships reneging on advertised test-drives\footnote{See, e.g., some of the stories detailed in the following thread: \url{https://www.reddit.com/r/cars/comments/66r9ze/ever_had_a_dealer_not_let_you_test_drive_a_car/}. One user describes a test-drive consisting merely of a drive around the dealership's parking lot.} or real-estate agents whisking progressive tenants through their apartment tours.

\bibliography{sample.bib}

\end{document}